\titleclass{\subsubsubsection}{straight}[\subsection]
\newcounter{subsubsubsection}[subsubsection]
\renewcommand\thesubsubsubsection{\thesubsubsection.\arabic{subsubsubsection}}
\titleformat{\subsubsubsection}{\normalfont\normalsize\itshape}{\thesubsubsubsection.\space}{0em}{}
\titlespacing*{\subsubsubsection}{0pt}{2ex plus 1ex minus .2ex}{0.75ex plus .2ex}
\def\toclevel@subsubsubsection{4}
\def\l@subsubsubsection{\@dottedtocline{4}{7em}{4em}}
\newtheorem{Proposition}{Proposition}
\newtheorem{assumption}{Assumption}
\newtheorem{definition}{Definition}
\newtheorem{lemma}{Lemma}
\newtheorem{proof}{Proof}
\newif\ifblackandwhite
\def\BibTeX{{\rm B\kern-.05em{\sc i\kern-.025em b}\kern-.08em
    T\kern-.1667em\lower.7ex\hbox{E}\kern-.125emX}}
\begin{document}
\begin{frontmatter}

\begin{titlepage}
\begin{center}
\vspace*{0.5cm}

\textbf{ Opinion dynamics on switching networks}
\vspace{2cm}

Amirreza Talebi$^{a}$ (talebi.14@osu.edu)\\

\hspace{10pt}

\begin{flushleft}
\small  
$^a$Department of Integrated Systems Engineering, The Ohio State University, Columbus, OH, USA\\[1mm]

\vspace{2.5cm}

\textbf{Corresponding Author:} \\
Amirrea Talebi\\
Department of Integrated Systems Engineering, The Ohio State University, Columbus, OH, USA \\
Email: talebi.14@osu.edu\\

\end{flushleft}        
\end{center}
\end{titlepage}

\title{ Opinion dynamics on switching networks}


\author{Amirreza Talebi$^a$}

\affiliation{organization={Department of Integrated Systems Engineering},
            addressline={The Ohio State University}, 
            city={Columbus},
            postcode={43210}, 
            state={OH},
            country={USA}}

\begin{abstract}
  We study opinion dynamics over a directed multilayer network. In particular, we consider networks in which the impact of neighbors of agents on their opinions is proportional to their in-degree. Agents update their opinions over time to coordinate with their neighbors. However, the frequency of agents' interactions with neighbors in different network layers differs. 
    Consequently, the multilayer network's adjacency matrices are time-varying. 
    We aim to characterize how the frequency of activation of different layers impacts the convergence of the opinion dynamics process.
\end{abstract}

\begin{keyword}
Opinion dynamics, multi-layer networks, consensus, social networks, coordination games

\end{keyword}

\end{frontmatter}

\allowdisplaybreaks

\section{Introduction}\label{intro:1}

Social networks are instrumental in the dissemination of diverse types of information such as rumors, advertisements, and even diseases \cite{kleinberg2007cascading}. Within these networks, interactions vary significantly across different platforms; for instance, friends on Facebook might not interact on Instagram or other social networks \cite{boccaletti2014structure}. These varying connections highlight the importance of considering the multi-layer nature of social networks to gain a thorough understanding \cite{de2016physics}. Multiplex networks, a specific type of multi-layer network, consist of nodes that exist across multiple layers simultaneously \cite{boccaletti2014structure}.

Previous studies have extensively explored coordination games on single-layer networks, including the work of \cite{kleinberg2007cascading, fazeli2015duopoly, ghaderi2013opinion}. More recently, research has shifted towards examining coordination games on multi-layer networks, with significant contributions from \cite{gomez2012evolution, hu2017opinion, wang2015evolutionary}, and see \cite[p. 70]{boccaletti2014structure} for a comprehensive review of the literature.

These coordination games are closely related to consensus problems, where agents aim to reach a collective agreement \cite{jadbabaie2003coordination, blondel2005convergence, hendrickx2005convergence, olshevsky2006convergence, olshevsky2009convergence}. Key studies in this area include works by \cite{blondel2005convergence, hendrickx2005convergence}. Additionally, some studies have introduced the concept of dynamic on and off layers within multi-layer networks, as investigated by \cite{ding2017asynchronous, dong2017dynamics}.

Building on this body of work, we developed a model for multi-layered coordination games incorporating the concept of dynamic layers. This model is based on the frameworks proposed by \cite{ghaderi2013opinion, ding2017asynchronous, dong2017dynamics} for single-layer networks. We further examined the convergence and convergence rate of this model, focusing on the impact of the switching frequency of these dynamic layers.

The remainder of this paper is organized as follows: Section 2 provides a detailed literature review. Section 3 introduces our model and its preliminaries. Section 4 presents an analytical analysis of the model's convergence to equilibrium. Section 5 investigates the model's convergence rate, and the final section concludes the paper.

\section{Related Work}\label{sec:rel}

In game-theoretic models, networked coordination games are crucial for understanding how individuals in online environments acquire information about new technologies, opinions, and rumors \cite{kleinberg2007cascading}. \cite{kleinberg2007cascading} studied the seeding problem, crucial in viral marketing, laying the groundwork for understanding these dynamics. However, they did not address the convergence rate of information evolution, which was later examined by \cite{montanari2009convergence} for single-layer networks. \cite{montanari2009convergence} found that well-connected networks exhibit slow convergence, whereas small, poorly connected networks converge more rapidly. \cite{ghaderi2013opinion} offered a different perspective, showing that complete graphs have faster convergence compared to poorly connected graphs like rings.
In multi-layer networks, \cite{gomez2012evolution} examined the dynamics of a coordination game (prisoner's dilemma) and found that multiplex networks improve the resilience of cooperative behaviors compared to single-layer networks. Likewise, \cite{hu2017opinion} studied opinion diffusion in multi-layer networks, highlighting the influence of stubborn agents and the effect of strongly connected agent groups on opinion convergence.

Consensus problems, which involve coordinating a group of agents to reach a common agreement, have been extensively studied. \cite{olfati2004consensus} explored dynamic agent consensus in directed and undirected graphs, considering various assumptions such as constant and switching network topology, and time delays. \cite{ren2005consensus} showed that consensus can be reached in a single directed network with switching topologies if the union of the network topologies forms a spanning tree. \cite{blondel2005convergence} and \cite{hendrickx2005convergence} investigated the convergence rates of bidirectional equal neighbor models, emphasizing the sequences of stochastic matrices with positive diagonal elements.
Consensus problems frequently involve studying stochastic matrix products, with foundational contributions by \cite{wolfowitz1963products, hajnal1958weak}. \cite{xia2015products} emphasized that for consensus to be achieved, the left-sided product sequence of stochastic matrices must converge to a rank one matrix. More recently, \cite{xia2018generalized} introduced generalized Sarymsakov matrices, which guarantee that the product of compact subsets results in a rank one matrix. \cite{olshevsky2006convergence} investigated the convergence rate of averaging algorithms, finding bounds for the second-largest eigenvalue modulus (SLEM). \cite{talebi2024opiniondynamicssocialmultiplex} analyzed how opinions converge in multiplex networks through coordination games, considering both one-way and two-way interactions and the presence of a leader. Using graph theory and Markov chains, it shows that opinions generally converge, with leader-led networks aligning with the leader's opinion and leaderless networks reaching a mix of all opinions. The presence of leaders and one-way interactions notably speed up the convergence process. Our work is very similar to this work in modeling, notations, and convergence rate analysis with big differences in assumptions and proofs. 

To date, no well-known studies have examined the effects of layer switching frequency on the convergence rate of opinion dynamics in multiplex networks. Our work addresses this gap by investigating how switching frequency impacts the convergence rate, focusing on networks with non-negative diagonal elements in their adjacency matrices.

\section{The multiplex network}\label{sec:network-model}

We study opinion dynamics on a two-layer multiplex network, wherein $n$ agents are present on, and interact with each other, over two different networks. As an example, each network layer could represent one form of social interaction (e.g., with family, coworkers, or online). Following \cite{talebi2024opiniondynamicssocialmultiplex}, let $\mathcal{M}=\{\mathcal{V},\mathcal{G}\}$ represent the multiplex network, where $\mathcal{V}=\{v_1, \ldots, v_n\}$ is the set of agents, with $v_i\in \mathcal{V}, i\in \mathcal{I}=\{1, \ldots, n\}$ denoting agent $i$,
and $\mathcal{G}$ is the set of weighted, directed graphs of the network layers, i.e., $\mathcal{G}=\{\mathcal{G}_{\alpha}=\{(\mathcal{V},\mathcal{E_{\alpha}},\mathcal{W_{\alpha}})| \mathcal{E_{\alpha}}\subseteq \mathcal{V} \times \mathcal{V}, \mathcal{W_{\alpha}}: \mathcal{E}_{\alpha}\rightarrow \mathbb{R}_{\geq 0}\}, \forall \alpha \in \{1, 2\}\}$. 

We denote the set of neighbors of agent $i$ in layer $\alpha$ by $\delta_{i\alpha} = \{v_j| (v_i,v_j)\in \mathcal{E}_{\alpha}, j\in \mathcal{I}\}$. As detailed shortly, an edge $(v_i, v_j)$ means that agent $i$'s opinion is influenced by agent $j$'s opinion. Therefore, $\delta_{i\alpha}$ represents all agents whose opinion influences that of agent $i$ on layer $\alpha$.

\begin{figure}
\centering
\input{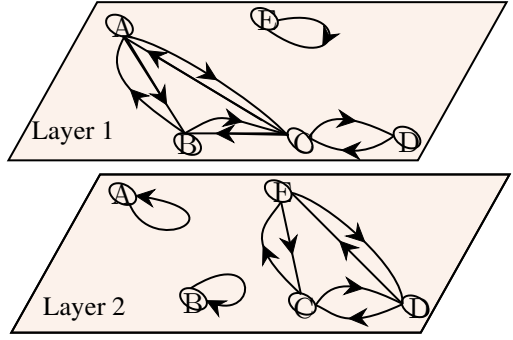}
\caption{An example of a two-layer multiplex network at $t=k$. 
\label{fig:1}
}
\end{figure}

\subsection{The adjacency matrices} 

In this paper, we study the evolution of agents' opinions as they interact with each other over an infinite, discrete time horizon. We consider scenarios in which one layer of the multiplex network may not necessarily be \emph{active} at each time step $t$. This allows us to capture situations in which agents interact with neighbors in different layers, and are impacted by their opinions, at different frequencies. For instance, an individual might interact daily with online friends, yet only infrequently with their extended family. Specifically, we assume that layer 1 is always active, while layer 2 becomes active only every $k$ time step, i.e., at times $\{k, 2k, \ldots\}$. We capture these time dependencies through the adjacency matrix of the multiplex network. Our modeling and notations are very similar to those of \cite{talebi2024opiniondynamicssocialmultiplex}.

Formally, we assume the weighted adjacency matrix $\mathcal{A}_\alpha$ of a single layer $\alpha$ 
is given by
\begin{align}\label{eqn:ad1}
\mathcal{A}_{\alpha}= \begin{cases}
[\mathcal{A}_{\alpha}]_{ij}= \frac{1}{|\delta_{i\alpha}|},& \text{$\forall j \in \delta_{i\alpha}, \forall i \in \mathcal{I},$}\\
0, & \text{otherwise,}
\end{cases}
\end{align}
where $[A_{\alpha}]_{ij}$ denotes the $ij^\textsuperscript{th}$ element of matrix $A_\alpha$. In other words, agent $i$ places equal weight on the opinion of neighbors $j$ who influence $i$'s opinion in layer $\alpha$. Note that we allow for self-loops, in which case the agent itself counts as one of its ``neighbors''.
 Based on these, the weighted adjacency matrix of the multiplex network for different time steps is defined as follows $\forall i\in \mathcal{I}$: 
  \begin{equation}
    [A(t)]_i=
    \begin{cases}
      \frac{1}{2}([\mathcal{A}_1]_i+[\mathcal{A}_2]_i), &  \text{if}\ t\equiv 0\ \mod\ k\ \& \ \\
      & [\mathcal{A}_1]_{ii}<1\ \& \ [\mathcal{A}_2]_{ii}<1\\
      [\mathcal{A}_{\alpha}]_i, & \text{if}\ \ t\equiv 0\ \mod\ k\ \&\\
      & [\mathcal{A}_{\beta}]_{ii}=1,\forall \alpha \neq \beta \in\\ 
      &\{1,2\}\\
      [\mathcal{A}_1]_i, & \text{otherwise.}
    \end{cases}\label{eqn:m2}
  \end{equation}
  Here, $[A]_i$ denotes row $i$ of matrix $A$.

In other words, with this definition of the adjacency matrix of the multiplex network, when both layers are active, each agent divides its attention equally between the two layers. 
Note also that matrices $\mathcal{A}_1$, $\mathcal{A}_2$ and $A(t), \forall t$, are row stochastic matrices i.e., $\sum_{j}[\mathcal{A}_1]_{ij} = \sum_{j}[\mathcal{A}_2]_{ij} = \sum_{j}[A(t)]_{ij} =1, \ \forall i,t$.

\subsection{Agents' opinion dynamics}\label{sec:opinion-model}

Each agent  $i\in \mathcal{I}$ has an opinion $x_i(t) \in [0,1]$ at time $t$. We study how this opinion evolves based on interactions with other agents on the multiplex network. Let $\mathbf{x}(t) = [x_1(t), \ldots, x_n(t)]^\top$ denote the vector of agents' opinions at time $t$, and $\mathbf{x}(0)$ denote agents' initial opinions.

We consider an extension of the coordination game (proposed in \cite{ghaderi2013opinion} for single-layer networks) to a multiplex network. In this game, the agent updates its opinion at time $t$ to minimize the following cost function: 
\begin{equation}\label{eqn:cost}
    J_i(\mathbf{x}(t))= 
    \begin{cases}
    \frac{1}{4}\sum_{\alpha=1}^{2}\sum_{j\in\delta_{i\alpha}}(x_i(t)-x_j(t))^2, \text{ if: }\\
    t\equiv 0\ \mod\ k,\ \& \ [\mathcal{A}_1]_{ii}<1\ \& \ [\mathcal{A}_2]_{ii}<1\\
    \\
    \frac{1}{2}\sum_{j\in\delta_{i\alpha}}(x_i(t)-x_j(t))^2, \text{ if: }\\
    t\equiv 0\ \mod\ k\ \&\  [\mathcal{A}_{\beta}]_{ii}=1,\forall \alpha\neq \beta\in \{1,2\}\\
    \\
    \frac{1}{2}\sum_{j\in \delta_{i1}}(x_i(t)-x_j(t))^2,\ \text{otherwise}. \\
    \end{cases}
\end{equation}

In other words, agent $i$ attempts to minimize the difference in its opinion with its neighbors; the above cost captures how this set changes depending on which layers are active.

Using the first-order condition $\frac{d J_i(\mathbf{x}(t))}{d x_i}=0$, the best response strategy of agent $i$ at time $t+1$ will be given by:

\begin{equation}\label{eqn:best-response}
    x_i(t+1)= 
    \begin{cases}
    \frac{1}{2}\sum_{\alpha=1}^{2}\sum_{j\in\delta_{i\alpha}}\frac{x_j(t)}{|\delta_{i\alpha}|},\ if:\\
    t\equiv 0\mod\ k,\ \& \ [\mathcal{A}_1]_{ii}<1\ \& \ [\mathcal{A}_2]_{ii}<1\\
    \\
    \sum_{j\in\delta_{i\alpha}}\frac{x_j(t)}{|\delta_{i\alpha}|},\ if:\\
    t\equiv 0\ \mod\ k\ \& [\mathcal{A}_{\beta}]_{ii}=1,\forall \alpha \neq \beta \in \{1,2\}\\
    \\
    \sum_{j\in \delta_{i1}}\frac{x_j(t)}{|\delta_{i1}|}, \text{ otherwise.}\\
    \end{cases}
\end{equation}
It is easy to verify that $\frac{d J_i^2(\mathbf{x},t)}{d x_i^2}>0$. Also, as $A(t)$ is a row stochastic matrix, under \eqref{eqn:best-response}, $0\leq \mathbf{x}(t+1)\leq 1$. Thus, the first-order condition is sufficient for finding agents' best responses. 

Writing the agents' best-response dynamics \eqref{eqn:best-response} in a compact form, we have  
\begin{align}\label{eqn:update-dynamics}
    \mathbf{x}(t+1) = A(t) \mathbf{x}(t)~.
\end{align}

The update equation in \eqref{eqn:update-dynamics} is similar to those arising in the study of opinion dynamics on single-layer networks with time-varying adjacency matrices (e.g., \cite{ hendrickx2005convergence,jadbabaie2003coordination, olshevsky2009convergence,ren2005consensus}). 
The main distinction of our analysis is in our focus on the periodic switching frequency in this graph (as motivated by a multiplex network setting), which enables us to derive sharper, tailored characterizations of the opinions' convergence rate accordingly.   

To proceed with our analysis, we make the following assumptions about the multiplex networks. 
\begin{assumption}\label{as:strongly-connected}
\begin{enumerate}
    \item Every strongly connected component of the network contains an odd cycle. 
    \item The multiplex network when both layers are active is strongly connected. 
    \item In both layers, agents' interactions are bidirectional (yet in general have different weights).
\end{enumerate}
\end{assumption}
We note that the presence of odd cycles (the first assumption) is essential to avoid bipartite structures, and has been common in prior work (e.g., \cite[Theorem 1.2]{bondy1976graph}, \cite[p. 8]{levin2017markov}, \cite{ghaderi2013opinion}). 
Further, the second assumption which states that $A(t)$ is strongly connected whenever $t\equiv\ 0\ \mod\ k$ is a mild assumption, as any disconnected network components can be studied separately.

\section{Convergence of the Opinion Profile}\label{sec:Convergence}

We begin by establishing that the opinion dynamics in \eqref{eqn:update-dynamics} will converge. The distinguishing character of the proof of convergence specific to our problem provided in this part is that by leveraging tools from matrix and graph theories, we have proved the convergence of the left product of stochastic matrices while they have non-negative diagonal elements and are not necessarily strongly connected.

We use the following definitions in this section.

\begin{definition}
\begin{enumerate}
    \item \cite{wolfowitz1963products} A (row) stochastic matrix $A$ is called \emph{SIA} (standing for Stochastic, Indecomposable, and Aperiodic) if there exists a rank one matrix $L$ such that:
    \[L = \lim_{n\rightarrow \infty} A^n~.\]
    \item \cite[Theorem 8.5.2.]{horn2012matrix} A non-negative  matrix $A$ is called \emph{primitive} if there exists a $k$ s.t. $A^k>0$, i.e., every element of $A^k$ is positive.
\end{enumerate}
\end{definition}

To analyze the convergence of agents' opinions, we begin by taking a closer look at the updates carried out over time. We note that the agents update their opinions solely on the active layer 1 for the first $k-1$ time steps, while at time $k$, they are influenced by neighbors in both layer 1 and layer 2. Therefore, denoting $C :=A(k)\mathcal{A}_1^{k-1}$, we have
\begin{align*}
&\mathbf{x}(t) = \mathcal{A}_1^t \mathbf{x}(0), ~\forall t=1, \ldots, k-1\\ 
&\mathbf{x}(k) = C \mathbf{x}(0)~.
\end{align*}
Or more compactly, and for subsequent time steps $t$, we have: 
\begin{align}
\mathbf{x}(t) &= \mathcal{A}_1^{t-k\lfloor \frac{t}{k}\rfloor} C^{\lfloor \frac{t}{k}\rfloor} \mathbf{x}(0).
\end{align}
As such, the convergence of the opinion profile $\mathbf{x}(t)$ depends on the properties of the matrix $C$, which we first establish in the following lemma.

\begin{lemma}\label{lem:lemma1}

Given Assumption~\ref{as:strongly-connected}, the matrix $C=A(k)\mathcal{A}_1^{k-1}$ is an SIA matrix.
\end{lemma}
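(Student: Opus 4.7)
The plan is to prove Lemma 1 by establishing that $C=A(k)\mathcal{A}_1^{k-1}$ is primitive, since a primitive row-stochastic matrix $M$ satisfies $\lim_n M^n=\mathbf{1}\pi^{\top}$ for its unique stationary distribution $\pi$ and is therefore SIA in the sense of Definition 1(1). The matrix $C$ is row-stochastic as a product of row-stochastic matrices, so only the graph-theoretic content is at issue: the support digraph $G_C$ should be strongly connected and aperiodic.

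The first step is to show that $A(k)$ itself is primitive. Its support digraph, determined by \eqref{eqn:m2}, is the combined multiplex graph, which by Assumption 1(2) is strongly connected. Assumption 1(3) implies that every non-self-loop edge of $G_{A(k)}$ is bidirectional (the forward and backward edges may come from different layers, but both contribute positive weight in $A(k)$), so each non-isolated node carries a $2$-cycle; Assumption 1(1) additionally provides an odd cycle in this component. Since $\gcd(2,2\ell+1)=1$, $G_{A(k)}$ is aperiodic, hence $A(k)$ is primitive.

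To pass from $A(k)$ to $C$, I would use the lower bound $[C]_{ij}=\sum_\ell [A(k)]_{i\ell}\,[\mathcal{A}_1^{k-1}]_{\ell j}\ge[A(k)]_{ij}\,[\mathcal{A}_1^{k-1}]_{jj}$. A sufficient condition for $G_C\supseteq G_{A(k)}$ is therefore $[\mathcal{A}_1^{k-1}]_{jj}>0$ for every $j$, which would immediately transfer strong connectivity and aperiodicity from $G_{A(k)}$ to $G_C$ and hence primitivity to $C$. If $j$ is isolated in layer $1$ then $[\mathcal{A}_1]_{jj}=1$ and the claim is immediate; otherwise Assumption 1(3) places $j$ in a non-trivial bidirectional strongly connected component of layer $1$, and Assumption 1(1) provides an odd cycle reachable from $j$ within that component. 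Splicing length-$2$ back-and-forth walks at $j$ with a traversal of that odd cycle yields closed walks through $j$ of both parities, and by the numerical-semigroup identity $\langle 2,\,2\ell+1\rangle\supseteq\mathbb{Z}_{\ge 2\ell}$ there is a closed walk of every sufficiently large length.

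The step I expect to be hardest is exactly this closed-walk construction when $k-1$ is small and odd relative to the length of the shortest odd cycle in $j$'s component, because the semigroup representation does not cover short lengths. The natural workaround---and, I suspect, the genuinely new technical content of the proof---is to argue primitivity of some power of $C$ rather than of $C$ itself: $C^m$ interleaves $m$ copies of $A(k)$ with $m(k-1)$ copies of $\mathcal{A}_1$, so the aggregate layer-$1$ walk length $m(k-1)$ can be taken large enough to clear any semigroup threshold, after which the argument above applies and $C^m$ is componentwise positive. Turning this scheduling/padding idea into a rigorous lower bound on $[C^m]_{ij}$ is the delicate heart of the proof, and is where the switching-frequency parameter $k$ genuinely interacts with the graph-theoretic assumptions.
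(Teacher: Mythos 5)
Your reduction hinges on the bound $[C]_{ij}\ge [A(k)]_{ij}\,[\mathcal{A}_1^{k-1}]_{jj}$ together with the claim that $[\mathcal{A}_1^{k-1}]_{jj}>0$ for every $j$, and that claim is exactly what the paper's relaxed assumptions do \emph{not} give you. Nodes in a non-trivial layer-1 component need not have self-loops, so a closed layer-1 walk at $j$ of length exactly $k-1$ need not exist when $k-1$ is odd and shorter than the shortest odd closed walk through $j$: for instance, if $j$ lies on a bidirectional $5$-cycle in layer 1 and $k=4$, then $[\mathcal{A}_1^{3}]_{jj}=0$ (and already for $k=2$ one needs $[\mathcal{A}_1]_{jj}>0$, i.e., a self-loop, which is precisely the hypothesis this paper removes). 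You recognize this, but the fallback you offer --- proving positivity of some power $C^m$ by letting the ``aggregate layer-1 walk length'' $m(k-1)$ clear a numerical-semigroup threshold --- is not carried out, and as stated the heuristic is not sound: in $C^m$ the layer-1 steps come in blocks of exactly $k-1$ separated by mandatory $A(k)$ steps, they cannot be concatenated into one long layer-1 walk, and nodes of $\mathcal{V}_{\xi}$ (resp.\ nodes without layer-1 self-loops) can only idle during layer-1 blocks (resp.\ cannot idle at all), so the closed-walk parity argument does not transfer blockwise. Since you yourself flag the rigorous lower bound on $[C^m]_{ij}$ as the unproven ``delicate heart,'' the proposal has a genuine gap at the only step that is specific to this lemma.

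For comparison, the paper never needs diagonal positivity of any power of $\mathcal{A}_1$: it studies $C'=A(k)\mathcal{A}_1$ directly, decomposes layer 1 into the self-loop-only nodes $\mathcal{V}_{\xi}$ and primitive strongly connected components, and obtains strong connectivity of $\mathcal{G}_{C'}$ by routing through an \emph{intermediate} node, via bounds of the form $[C']_{io}\ge [A(k)]_{ij}[\mathcal{A}_1]_{jo}>0$ (using $\mathcal{E}_{\mathcal{A}_1}\setminus\mathcal{E}_{\xi}\subset\mathcal{E}_{A(k)}$ and, for $\mathcal{V}_{\xi}$-nodes, $[\mathcal{A}_1]_{ii}=1$); aperiodicity then follows from the coexistence of odd cycles and $2$-cycles in $\mathcal{G}_{C'}$, and the case of general $k$ is handled by the SIA-power property rather than by diagonal entries of $\mathcal{A}_1^{k-1}$. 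Your opening steps (row-stochasticity of $C$, primitivity of $A(k)$ from Assumptions 1--3) are fine and consistent with the paper, but to repair the proposal you would need to replace the diagonal-positivity route by an intermediate-node argument of this kind, or actually prove the blockwise walk construction you sketch.
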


\begin{proof}

First, we note the products of row stochastic matrices result in a row stochastic matrix \cite[8.7.P1]{horn2012matrix}; hence, $C$ is a row stochastic matrix. We next show it is also irreducible and aperiodic.

We begin by noting that the underlying graph of layer 1, $\mathcal{G}_{\mathcal{A}_1}=\{\mathcal{V}, \mathcal{E}_{\mathcal{A}_1}\}$ might \emph{not} be strongly connected. In general, this layer can consist of a number of individual nodes with only self-loops, and $q$ strongly connected components each having more than one node in each component. Formally, define $\mathcal{V}_{\xi} := \{v_i|(v_i,v_i)\in \mathcal{E}_{\mathcal{A}_1}, (v_i,v_j)\notin \mathcal{E}_{\mathcal{A}_1},\forall j\neq i\in \mathcal{I}\}$, and $\mathcal{E}_{\xi} := \{(v_i,v_i)|v_i\in \mathcal{V}_{\xi}\}$. 

In other words, $\mathcal{V}_{\xi}$ and $\mathcal{E}_{\xi}$ correspond to the agents with only self-loops and their self-loops in layer 1, respectively. 
Additionally, let $m\in M :=\{m_1,...,m_q\}$ denote each of the $q$ strongly connected components of $\mathcal{G}_{\mathcal{A}_1}$ s.t. $|\mathcal{V}_m|>1$. Let $\mathcal{A}_{1m}$ denote the adjacency matrix of the strongly connected component $m$. Further, let $N:= \{i|v_i\in \mathcal{V}-\mathcal{V}_{\xi}\}$ denote the set of nodes that are in one of the $q$ strongly connected components.

Next, for any given $n\times n$ matrix $A$, we define an associated network $\mathcal{G}_{A}:=\{\mathcal{V}, \mathcal{E}_A\}$ as one with nodes $\mathcal{V}$ and associated edges and edge weights determined by the matrix $A$. We start with the matrix $C'=A(k)\mathcal{A}_1$, and consider the network $\mathcal{G}_{C'}$ associated with this matrix. Note that if $[C']_{ij}>0$, that would mean that there is a walk of length two from node $i$ to node $j$ on $\mathcal{G}_{C'}$.

In addition, if $[C']_{ij}>0$, we can infer that there is at least one node $k$ s.t. $(i,k)\in \mathcal{E}_{A(k)}$ and $(k,j)\in \mathcal{E}_{\mathcal{A}_1}$. 

We now show that $\mathcal{G}_{C'}$ is strongly connected.

When both layers are active, there is no isolated agent with a self-loop in the network, and therefore, $\mathcal{E}_{\mathcal{A}_1}/ \mathcal{E}_{\xi}\subset \mathcal{E}_{A(k)}$. From this, we can conclude that component $m$ also exists on $\mathcal{G}_{A(k)}$ for all $m\in M$. 
Moreover, due to Assumption~\ref{as:strongly-connected}, there are odd cycles in each strongly connected component of $\mathcal{G}_{\mathcal{A}_1}$ and interactions are bidirectional, i.e., if $(i,j)\in \mathcal{E}_{\mathcal{A}_1}$, then $(j,i)\in \mathcal{E}_{\mathcal{A}_1}$. Thus, there are cycles of length two in the strongly connected component $m,\ \forall m\in M$. Hence, the largest common divisor of lengths of cycles in 
each component $m\in M$ is one and the adjacency matrix associated with each such component is a primitive matrix \cite[Theorem 8.5.3]{horn2012matrix}, thereby being aperiodic \cite[p. 694]{meyer2000matrix}.
It is easy to check that the adjacency matrices of these components are also stochastic matrices. Finally, from \cite[p. 121]{chevalier2018convergent}, if $A$ is an SIA matrix, then $A^k,\ \forall k\geq 1$ remains an SIA matrix. Also, we can observe that $A(k)\mathcal{A}_1$ contains $\mathcal{A}_{1m}^2$ $\forall m\in M$ implicitly, which are SIA matrices. Hence,
nodes $v_i$ s.t. $i\in N$ on $\mathcal{G}_{C'}$ remain strongly connected. Note that an SIA matrix is irreducible. Hence, the underlying graph of an SIA matrix is strongly connected.

We now show that there exist edges in $\mathcal{G}_{C'}$ making these components strongly connected.

Since $\mathcal{G}_{A(k)}$ is strongly connected (Assumption~\ref{as:strongly-connected}), there exist some edges connecting components $m,\ \forall m\in M$. Let say $(i,j)\in \mathcal{E}_{A(k)}$ connects component $m_k$ to $m_l$. So, $\exists 
\ o\in \mathcal{V}_{m_l}$ s.t. $(j,o)\in \mathcal{E}_{\mathcal{A}_1}$. Hence, $[C']_{io}\geq [A(k)]_{ij}[\mathcal{A}_1]_{jo}>0$, implying that there are still edges between these components in $\mathcal{G}_{C'}$. Thus, nodes $v_i,\ i\in N$ are strongly connected on $\mathcal{G}_{C'}$. 

For nodes in $\mathcal{V}_{\xi}$, we consider two scenarios.
In first scenario, all nodes in $\mathcal{V}_{\xi}$ are  directly connected to nodes of set $\mathcal{V}/\mathcal{V}_{\xi}$ on $\mathcal{G}_{A(k)}$. In the second scenario, some of them are connected and there exist edges to connect them to nodes of $\mathcal{V}/\mathcal{V}_{\xi}$ on $\mathcal{G}_{A(k)}$. In the second scenario, suppose $i,j\in \mathcal{V}_{\xi}$ s.t. $(i,j)\in \mathcal{E}_{A(k)}$, then, due to Assumption \ref{as:strongly-connected}, $(j,i)\in \mathcal{E}_{A(k)}$. Since $(i,i),(j,j)\in \mathcal{E}_{\xi}\subset \mathcal{E}_{\mathcal{A}_1}$, then $[C']_{ij}> [A(k)]_{ij}[\mathcal{A}_1]_{jj}>0,$ and $[C']_{ji}=\sum_i [A(k)]_{ji}[\mathcal{A}_1]_{ii}>0$. 
Therefore, the second scenario boils down to the first scenario. 

We proceed with proving for the first scenario.
We know $\exists i\notin N,\ j\in N$ s.t. $v_i\in \mathcal{V}_{\xi}$, $[A(k)]_{ij}>0,$ and $[A(k)]_{ji}>0$ due to Assumption \ref{as:strongly-connected}. Furthermore, $[\mathcal{A}_1]_{ii}=1$ since $i \notin N$. Also, $\exists o\in N$ s.t. $[\mathcal{A}_1]_{jo}>0$ due to Assumption~\ref{as:strongly-connected}. Therefore, $[C']_{io}>[A(k)]_{ij}[\mathcal{A}_1]_{jo}>0$ and $[C']_{ji}>[A(k)]_{ji}[\mathcal{A}_1]_{ii}>0$. Thus, $\mathcal{G}_{C'}$ is strongly connected thereby matrix $C'$ is irreducible \cite{meyer2000matrix}.

 For the proof of aperiodicity of matrix $C'$,
 we show that the biggest common divisor of lengths of cycles in $\mathcal{G}_{C'}$ is one meaning that $C'$ is a primitive matrix \cite[Theorem 8.5.3]{horn2012matrix} and therefore, aperiodic \cite[p. 694]{meyer2000matrix}. Firstly, odd cycles on $\mathcal{G}_{\mathcal{A}_1}$ also exist on $\mathcal{G}_{A(k)}$. It is a triviality to show that these odd cycles also are on $\mathcal{G}_{C'}$. Hence, there is at least one odd cycle in the strongly connected graph $\mathcal{G}_{C'}$.

 Furthermore, due to Assumption \ref{as:strongly-connected}, if  $(i,j)\in \mathcal{E}_{\mathcal{A}_1}$, s.t. $i,j\in N$, then $(j,i)\in \mathcal{E}_{\mathcal{A}_1}$. Moreover, $\exists o\in N$ s.t. $(o,j)\in \mathcal{E}_{\mathcal{A}_1}$ and $(j,o)\in \mathcal{E}_{\mathcal{A}_1}$. This is due to Assumption~\ref{as:strongly-connected} that every strongly connected component of $\mathcal{A}_1$ contains an odd cycle. So, if there is no such $o$, then there are only two nodes in a component with an odd cycle which is a contradiction. 
 Also, we know that $\mathcal{E}_{\mathcal{A}_1}/\mathcal{E}_{\xi}\subset \mathcal{E}_{A(k)}$. As a result, $[C']_{io}>[A(k)]_{ij}[\mathcal{A}_1]_{jo}>0$ and $[C']_{oi}>[A(k)]_{oj}[\mathcal{A}_1]_{ji}>0$ implying that there exist a cycle of length two on $\mathcal{G}_{C'}$. Therefore, we conclude that the biggest common divisor of lengths of cycles in $\mathcal{G}_{C'}$ is one. Eventually, we conclude that $C'$ is an aperiodic matrix due to its primitivity.
 
 We note that if $A$ corresponds to an SIA matrix, $A^k$ corresponds to an SIA matrix \cite[p. 121]{chevalier2018convergent}. Using this, together with the assumptions at the beginning of this proof, this lemma can be easily extended to show that matrices of the form $A(k)\mathcal{A}_1^{k-1}$ are SIA matrices for any $k\geq 1$. 
 
\end{proof}

We are now ready to prove our convergence result. 

\begin{Proposition}\label{prop:opinions-converge}
Under Assumption~\ref{as:strongly-connected} on the multiplex network, the opinion dynamics in \eqref{eqn:update-dynamics} converges to a profile $\mathbf{\bar{x}}$, i.e., $\lim_{t\rightarrow \infty} \mathbf{x}(t) = \mathbf{\bar{x}}$. 
\end{Proposition}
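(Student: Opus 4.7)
The plan is to reduce the convergence question to the convergence of powers of the single ``period-$k$'' matrix $C=A(k)\mathcal{A}_1^{k-1}$, which was already proved to be SIA in Lemma~\ref{lem:lemma1}. Concretely, I would first write any time index as $t=mk+r$ with $r\in\{0,1,\dots,k-1\}$, so that
\begin{equation*}
\mathbf{x}(mk+r) \;=\; \mathcal{A}_1^{\,r}\, C^{m}\, \mathbf{x}(0).
\end{equation*}
The strategy is to show that (i) along each fixed residue class $r$ the subsequence $\mathbf{x}(mk+r)$ converges as $m\to\infty$, and (ii) all $k$ residue-class limits coincide; this gives convergence of the whole sequence.

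For step (i), Lemma~\ref{lem:lemma1} gives that $C$ is SIA, so by the definition of SIA there is a rank-one limit $L=\lim_{m\to\infty} C^m$. Because $C$ is row stochastic, $C\mathbf{1}=\mathbf{1}$, so $L\mathbf{1}=\mathbf{1}$; being rank one, $L$ must have the form $L=\mathbf{1}\pi^{\top}$ for some probability vector $\pi$ (the common row of $L$). Passing to the limit,
\begin{equation*}
\lim_{m\to\infty} \mathbf{x}(mk+r) \;=\; \mathcal{A}_1^{\,r}\, L\, \mathbf{x}(0) \;=\; \mathcal{A}_1^{\,r}\bigl(\pi^{\top}\mathbf{x}(0)\bigr)\mathbf{1} \;=\; \bigl(\pi^{\top}\mathbf{x}(0)\bigr)\, \mathcal{A}_1^{\,r}\mathbf{1}.
\end{equation*}
For step (ii), row stochasticity of $\mathcal{A}_1$ gives $\mathcal{A}_1^{\,r}\mathbf{1}=\mathbf{1}$ for every $r\ge 0$, so the limit equals the constant vector $\bar{\mathbf{x}}=(\pi^{\top}\mathbf{x}(0))\mathbf{1}$ \emph{independently of $r$}. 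Since the finitely many subsequences indexed by $r=0,\dots,k-1$ partition $\mathbb{N}$ and all share the common limit $\bar{\mathbf{x}}$, the full sequence $\mathbf{x}(t)$ converges to $\bar{\mathbf{x}}$, which is the claim.

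The only subtle point, and the one I expect to be the main obstacle to state cleanly, is the passage from SIA-ness of $C$ to the structural conclusion that $L=\mathbf{1}\pi^{\top}$ with $\pi$ a probability vector; once this is in hand the rest of the argument is a routine combination of continuity of matrix multiplication with the invariance $\mathcal{A}_1\mathbf{1}=\mathbf{1}$. Note also that a consensus profile is obtained ``for free'' here because the intermittent action of the row-stochastic $\mathcal{A}_1^{\,r}$ factor cannot disturb a vector already proportional to $\mathbf{1}$; this is precisely why Lemma~\ref{lem:lemma1} is the key input and why no further assumption beyond Assumption~\ref{as:strongly-connected} is needed for convergence.
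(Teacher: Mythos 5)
Your proposal is correct and follows essentially the same route as the paper: decompose $\mathbf{x}(t)=\mathcal{A}_1^{t-k\lfloor t/k\rfloor}C^{\lfloor t/k\rfloor}\mathbf{x}(0)$, invoke Lemma~\ref{lem:lemma1} to get $\lim_m C^m=\mathbf{1}\pi^{\top}$, and use $\mathcal{A}_1^{r}\mathbf{1}=\mathbf{1}$ to conclude that the limit is $\bar{\mathbf{x}}=(\pi^{\top}\mathbf{x}(0))\mathbf{1}$ independently of the residue $r$. The only cosmetic difference is that you derive the rank-one limit's form $\mathbf{1}\pi^{\top}$ directly from row stochasticity, whereas the paper cites the SIA/Wolfowitz result; the substance is the same.
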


\begin{proof}

Based on lemma \ref{lem:lemma1}, we know that $C$ is an SIA matrix. From  \cite{wolfowitz1963products,ghaderi2013opinion}for SIA matrices we have
\begin{align}
\lim_{t\rightarrow \infty}C^{\lfloor \frac{t}{k}\rfloor} = \mathbf{1}_n \mathbf{\pi}^{\top}_n,
\end{align}
where $\mathbf{\pi}$ is a column vector of dimension $n\times 1$ indicating the unique stationary distribution of the Markov Chain with probability transition matrix $C$. Note that since $\mathcal{A}_1$ is a row stochastic matrix, we have $\mathcal{A}_1^t\mathbf{1}_n = \mathbf{1}_n, \forall t$. So, 
\begin{align}\label{eqn:convergence-A(t)}
\lim_{t\rightarrow \infty} \mathbf{x}(t) &= \lim_{t\rightarrow \infty}\mathcal{A}_1^{t-(k)\lfloor \frac{t}{k}\rfloor} C^{\lfloor \frac{t}{k}\rfloor} \mathbf{x}(0)\nonumber \\
&=\lim_{t\rightarrow \infty}\mathcal{A}_1^{t-(k)\lfloor \frac{t}{k}\rfloor}\mathbf{1}_n\mathbf{\pi}_n^{\top} \mathbf{x}(0)= \mathbf{1}_n\mathbf{\pi}_n^{\top} \mathbf{x}(0) =\mathbf{\bar{x}}.
\end{align}

Thus, the opinions $\mathbf{x}(t)$ converge as $t\rightarrow \infty$. 
\end{proof}

\section{Convergence Rate}\label{sec:Convergence Rate}

In this section, we are investigating the convergence rate of the opinion dynamics model \eqref{eqn:update-dynamics} based on  \cite{blondel2005convergence}, \cite{olshevsky2009convergence}, and \cite{talebi2024opiniondynamicssocialmultiplex}. 

The matrix $C$ is an SIA matrix. Therefore, its spectral radius is one. So, the SLEM of the matrix will bound the convergence rate of matrix $C$.
Since the matrix $C$ is not guaranteed to be symmetric, we might confront complex eigenvalues. This is not a barrier since we are dealing with the modulus of the SLEM.

\begin{lemma}[\cite{blondel2005convergence}]\label{lem:lemma2}
The convergence rate of the opinion dynamics with weighted adjacency matrix $C$ is as follows:
$$||\mathbf{x}(t) - \mathbf{\bar{x}}||_{\infty}\leq 2Uq^t||\mathbf{x}(0)||_2$$
Where $\mathbf{\bar{x}}=\lim_{t\rightarrow \infty}C^t\mathbf{x}(0)$, $U>0$ and $q$ is number greater than modulus of SLEM of matrix $C$.
\end{lemma}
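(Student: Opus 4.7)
The plan is to leverage the SIA property of $C$ (established in Lemma~\ref{lem:lemma1}) by splitting $C$ into a rank-one ``asymptotic'' projector and a ``transient'' operator whose spectral radius is exactly the SLEM of $C$, then controlling the transient part with Gelfand's spectral-radius formula.

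First, I would introduce the splitting $C=\mathbf{1}_n\pi_n^{\top}+W$, where $W:=C-\mathbf{1}_n\pi_n^{\top}$. Using $C\mathbf{1}_n=\mathbf{1}_n$, $\pi_n^{\top}C=\pi_n^{\top}$, and $\pi_n^{\top}\mathbf{1}_n=1$, a direct computation yields
\begin{equation*}
(\mathbf{1}_n\pi_n^{\top})\,W \;=\; W\,(\mathbf{1}_n\pi_n^{\top}) \;=\; 0, \qquad (\mathbf{1}_n\pi_n^{\top})^2 \;=\; \mathbf{1}_n\pi_n^{\top},
\end{equation*}
so a one-line induction gives $C^t=\mathbf{1}_n\pi_n^{\top}+W^t$ for every $t\geq 1$. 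Combined with $\bar{\mathbf{x}}=\mathbf{1}_n\pi_n^{\top}\mathbf{x}(0)$ from Proposition~\ref{prop:opinions-converge}, this identity reduces the claim to controlling $\|W^t\mathbf{x}(0)\|_\infty$.

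Next, I would identify the spectrum of $W$. Because $\mathbf{1}_n\pi_n^{\top}$ is the spectral projector onto the simple eigenvalue $1$ of $C$, the eigenvalues of $W$ are $\{0\}\cup(\sigma(C)\setminus\{1\})$, and hence $\rho(W)$ equals the modulus of the SLEM of $C$. Gelfand's spectral-radius formula then guarantees that for every $q$ strictly larger than this modulus there exists a constant $U'>0$ with $\|W^t\|_2\leq U' q^t$ for all $t\geq 0$. A standard norm comparison closes the argument:
\begin{equation*}
\|\mathbf{x}(t)-\bar{\mathbf{x}}\|_\infty \;=\; \|W^t\mathbf{x}(0)\|_\infty \;\leq\; \|W^t\mathbf{x}(0)\|_2 \;\leq\; U' q^t \|\mathbf{x}(0)\|_2,
\end{equation*}
and absorbing the numerical factor of $2$ into the constant (i.e.\ setting $U:=U'/2$, or tracking a crude $\|\cdot\|_\infty$-to-$\|\cdot\|_2$ conversion that is at worst a factor of two) yields the advertised bound.

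The main obstacle is the possibility that $C$ is non-diagonalizable. A nontrivial Jordan block at a non-principal eigenvalue $\lambda$ contributes polynomial-in-$t$ prefactors of the form $\binom{t}{j}|\lambda|^{t-j}$ to $\|W^t\|_2$, so one cannot in general take $q$ equal to the SLEM modulus itself. This is precisely why the statement insists that $q$ be strictly greater than $|\lambda_2(C)|$: any such polynomial prefactor is eventually dominated by the geometric slack $(q/|\lambda_2(C)|)^t$, at the price of a constant $U$ that is not explicit and depends both on $q$ and on the Jordan structure of $C$.
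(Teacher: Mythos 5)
Your argument is correct, but be aware that the paper itself contains no proof of this lemma: it is imported as a black box from \cite{blondel2005convergence}, where the corresponding bound is obtained in a different (more general, time-varying products of stochastic matrices) framework. Your route is a self-contained spectral proof for the fixed matrix $C$, which is exactly the level of generality the statement needs once Lemma~\ref{lem:lemma1} is in place. The algebra checks out: $C\mathbf{1}_n=\mathbf{1}_n$, $\pi_n^{\top}C=\pi_n^{\top}$ and $\pi_n^{\top}\mathbf{1}_n=1$ give $C^t=\mathbf{1}_n\pi_n^{\top}+W^t$, Gelfand's formula handles $\|W^t\|_2$ for any $q$ strictly above $\rho(W)$, and $\|v\|_{\infty}\leq\|v\|_2$ closes the estimate, with the factor $2U$ harmless since $U$ is an unspecified positive constant. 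The one step you should anchor explicitly is the claim that $\mathbf{1}_n\pi_n^{\top}$ is the spectral projector of a \emph{simple} eigenvalue $1$ with no other eigenvalues on the unit circle; this is not true for an arbitrary row-stochastic matrix, but it does follow from the SIA/primitivity of $C$ established in Lemma~\ref{lem:lemma1} (a rank-one limit of $C^t$ forces the eigenvalue $1$ to be simple and excludes other unit-modulus eigenvalues), so cite that lemma rather than asserting the projector property. Your Jordan-block remark correctly explains why $q$ must exceed the SLEM modulus strictly. In short: the paper's citation buys a result valid for more general matrix products, while your proof buys transparency and self-containedness for the specific constant-matrix dynamics $\mathbf{x}(t)=C^t\mathbf{x}(0)$ used in this lemma.
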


\begin{Proposition}

The convergence rate of the opinion dynamics \eqref{eqn:update-dynamics} is as follows \cite{talebi2024opiniondynamicssocialmultiplex}:
\begin{align}
    ||\mathbf{x}(t)-\mathbf{\bar{x}}||_{\infty} \leq 2U||\mathcal{A}_1^{t-k\lfloor \frac{t}{k}\rfloor}||_1q^{\lfloor \frac{t}{k}\rfloor}||x(0)||_2
\end{align}

\end{Proposition}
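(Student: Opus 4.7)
The plan is to reduce the claim to the single-matrix convergence bound provided by Lemma~\ref{lem:lemma2} applied to the SIA matrix $C$, and then pay a norm penalty for the residual factor $\mathcal{A}_1^{t - k\lfloor t/k\rfloor}$ that pre-multiplies $C^{\lfloor t/k\rfloor}\mathbf{x}(0)$ whenever $t$ is not an exact multiple of $k$. The central algebraic observation that makes this clean is that the limit profile $\mathbf{\bar{x}}$ is a fixed point of $\mathcal{A}_1$ (since $\mathcal{A}_1$ is row stochastic and $\mathbf{\bar{x}}$ is proportional to $\mathbf{1}_n$), so pre-multiplication by $\mathcal{A}_1^r$ commutes with the subtraction of $\mathbf{\bar{x}}$.

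Concretely, I would first set $s=\lfloor t/k\rfloor$ and $r = t-ks\in\{0,\ldots,k-1\}$, and recall the representation $\mathbf{x}(t) = \mathcal{A}_1^r C^s \mathbf{x}(0)$ derived just before Lemma~\ref{lem:lemma1}. Second, using Proposition~\ref{prop:opinions-converge}, I have $\mathbf{\bar{x}} = \mathbf{1}_n \mathbf{\pi}_n^\top \mathbf{x}(0)$, and since $\mathcal{A}_1^r \mathbf{1}_n = \mathbf{1}_n$ this gives the invariance $\mathcal{A}_1^r \mathbf{\bar{x}} = \mathbf{\bar{x}}$. Subtracting then yields the factorization
$\mathbf{x}(t)-\mathbf{\bar{x}} = \mathcal{A}_1^r\bigl(C^s \mathbf{x}(0) - \mathbf{\bar{x}}\bigr).$
Third, taking $||\cdot||_\infty$ of both sides and invoking sub-multiplicativity of the induced norm separates the two factors as $||\mathcal{A}_1^r||\cdot||C^s\mathbf{x}(0) - \mathbf{\bar{x}}||_\infty$. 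Finally, Lemma~\ref{lem:lemma2} applied to $C$ (which is SIA by Lemma~\ref{lem:lemma1}) bounds the remaining factor by $2U q^s ||\mathbf{x}(0)||_2$, where $q$ exceeds the modulus of the SLEM of $C$. Substituting the definitions of $r$ and $s$ reproduces the stated inequality.

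The main obstacle is the matrix-norm step. Since $\mathcal{A}_1^r$ is row stochastic, its induced $\infty$-norm equals $1$ exactly, so the most direct application of sub-multiplicativity would leave no residual factor. To recover the form in the statement containing $||\mathcal{A}_1^{t-k\lfloor t/k\rfloor}||_1$, I would invoke the mixed-norm inequality $||Ay||_\infty \leq ||A||_1\, ||y||_\infty$ (interpreted with the column-sum convention used in \cite{talebi2024opiniondynamicssocialmultiplex}) to obtain the tracked factor, noting that the bound is uniform in $r \in \{0,\ldots,k-1\}$ since only finitely many powers $\mathcal{A}_1^r$ appear. Verifying this norm convention and ensuring the implicit constants $U$ and $q$ inherited from Lemma~\ref{lem:lemma2} are the same across the factorization is the only delicate bookkeeping; everything else is a direct substitution.
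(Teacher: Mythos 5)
Your proposal matches the paper's proof essentially step for step: the paper likewise writes $\mathbf{x}(t)-\mathbf{\bar{x}} = \mathcal{A}_1^{t-k\lfloor t/k\rfloor}\bigl(C^{\lfloor t/k\rfloor}\mathbf{x}(0)-\mathbf{1}\pi^{\top}\mathbf{x}(0)\bigr)$ using $\mathcal{A}_1\mathbf{1}=\mathbf{1}$, pulls out the factor $||\mathcal{A}_1^{t-k\lfloor t/k\rfloor}||_1$, and then applies Lemma~\ref{lem:lemma2} to the SIA matrix $C$ to get $2Uq^{\lfloor t/k\rfloor}||\mathbf{x}(0)||_2$. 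Your remark that row stochasticity makes the residual norm factor valid (indeed the sharper bound $||\mathcal{A}_1^r||_{\infty}=1$ would suffice) is the same bookkeeping step the paper performs, so the argument is correct and not a different route.
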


\begin{proof}

 From \cite{olshevsky2009convergence}, \cite[p. 341]{horn2012matrix}, and Lemma~\ref{lem:lemma2} observe that: 
 \begin{align}\label{eqn:cont2}
    &||\mathbf{x}(t)-\mathbf{\bar{x}}||_{\infty} = \nonumber\\
    &||A(t)...A(1)\mathbf{x}(0)-\mathbf{1}\pi^\top\mathbf{x}(0)||_{\infty}=\nonumber\\
    &||\mathcal{A}^{t-k\lfloor\frac{t}{k}\rfloor}_1 C^{\lfloor\frac{t}{k}\rfloor}\mathbf{x}(0)-\mathcal{A}^{t-k\lfloor\frac{t}{k}\rfloor}_1\mathbf{1}\pi^\top\mathbf{x}(0)||_{\infty}\leq\nonumber\\
    &||\mathcal{A}^{t-k\lfloor\frac{t}{k}\rfloor}_1||_{1}||C^{\lfloor\frac{t}{k}\rfloor}\mathbf{x}(0)-\mathbf{1}\pi^\top\mathbf{x}(0)||_{\infty}\leq \nonumber\\
    & 2U||\mathcal{A}_1^{t-k\lfloor \frac{t}{k}\rfloor}||_1q^{\lfloor \frac{t}{k}\rfloor}||x(0)||_2
 \end{align}

And the proof is done.

\end{proof}

\begin{Proposition}\label{eqn:prop3}
The convergence rate function 
$f(t) = ||\mathbf{x}(t)-\mathbf{\bar{x}}||_{\infty}$ is a decreasing function.
\end{Proposition}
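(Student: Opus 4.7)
The plan is to exploit two elementary but essential facts about the update rule \eqref{eqn:update-dynamics}: that every $A(t)$ is row stochastic (noted after \eqref{eqn:m2}), and that by Proposition~\ref{prop:opinions-converge} the limit $\mathbf{\bar{x}} = \mathbf{1}_n \pi_n^\top \mathbf{x}(0)$ is a scalar multiple of $\mathbf{1}_n$. Together, these imply that $\mathbf{\bar{x}}$ is a fixed point of every $A(t)$, which reduces the bound on $f(t+1)$ to the action of a row-stochastic matrix on the error vector at time $t$, and row-stochastic matrices are non-expansive in the infinity norm.

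The key steps I would carry out are as follows. First, verify that $A(t)\mathbf{\bar{x}} = \mathbf{\bar{x}}$: since $A(t)$ is row stochastic, $A(t)\mathbf{1}_n = \mathbf{1}_n$, and therefore $A(t)\mathbf{\bar{x}} = (\pi_n^\top \mathbf{x}(0))\,A(t)\mathbf{1}_n = \mathbf{\bar{x}}$. Second, substitute the dynamics to obtain
\begin{align*}
\mathbf{x}(t+1) - \mathbf{\bar{x}} \;=\; A(t)\mathbf{x}(t) - A(t)\mathbf{\bar{x}} \;=\; A(t)\bigl(\mathbf{x}(t)-\mathbf{\bar{x}}\bigr).
\end{align*}
Third, take the infinity norm of both sides and apply the standard bound $\|Av\|_\infty \leq \|A\|_\infty \|v\|_\infty$ together with the fact that for any row-stochastic matrix $\|A\|_\infty = \max_i \sum_j |A_{ij}| = \max_i \sum_j A_{ij} = 1$. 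This yields $f(t+1) \leq f(t)$, establishing the monotonicity claim.

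The main subtlety, rather than a genuine obstacle, lies in interpreting the word \emph{decreasing}. The argument above only delivers a non-increasing $f$; equality $f(t+1) = f(t)$ is in principle possible at isolated time steps when the extremal-magnitude components of $\mathbf{x}(t)-\mathbf{\bar{x}}$ happen to be averaged against components of equal sign and magnitude by $A(t)$. A strict decrease at every step would require a finer argument invoking the SIA property of $C$ from Lemma~\ref{lem:lemma1} and the strong connectivity under Assumption~\ref{as:strongly-connected}, so that over any block of $k$ consecutive updates the error strictly contracts. I would therefore state and prove the proposition as monotone non-increasing, and observe that the strict geometric decrease already follows on the coarser time-scale $t \mapsto t+k$ from the convergence-rate bound in the preceding proposition.
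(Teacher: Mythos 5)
Your argument is correct and is essentially the paper's own proof: the paper's componentwise manipulation amounts exactly to observing that $\mathbf{\bar{x}}$ is a consensus vector fixed by the row-stochastic update matrix and then invoking infinity-norm non-expansiveness, which is what you do in matrix form. Your caveat that only $f(t+1)\leq f(t)$ (non-increasing) is obtained also matches the paper, whose proof likewise concludes with a weak inequality despite the word ``decreasing'' in the statement.
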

\begin{proof}
Let $f(t_1) = ||F\mathbf{x}(0)-\mathbf{\bar{x}}||_{\infty} =|\sum_{j}[F]_{i^*j}[\mathbf{x}(0)]_j-[\mathbf{\bar{x}}]_{i^*}|$ where $F$ is a row stochastic matrix ($F = \mathcal{A}^{t_1-k\lfloor\frac{t_1}{k}\rfloor}_1 C^{\lfloor\frac{t_1}{k}\rfloor}$). Now, let $f(t_1+1) =  ||AF\mathbf{x}(0)-\mathbf{\bar{x}}||_{\infty}$ where $A = \mathcal{A}_1$ or $A = B$. Observe that $||AF\mathbf{x}(0)-\mathbf{\bar{x}}||_{\infty} = |\sum_j\sum_i [A]_{ki}[F]_{ij}[\mathbf{x}(0)]_j-\mathbf{\bar{x}}_k| =|\sum_i[A]_{ki}\sum_j [F]_{ij}[\mathbf{x}(0)]_j-\mathbf{\bar{x}}_k|= |\sum_i[A]_{ki}(\sum_j [F]_{ij}[\mathbf{x}(0)]_j-[\mathbf{\bar{x}}]_{i^*})+\sum_i[A]_{ki}[\mathbf{\bar{x}}]_{i^*}-\mathbf{\bar{x}}_k| = |\sum_i[A]_{ki}(\sum_j [F]_{ij}[\mathbf{x}(0)]_j-[\mathbf{\bar{x}}]_{i^*})\leq |\sum_{j}[F]_{i^*j}[\mathbf{x}(0)]_j-[\mathbf{\bar{x}}]_{i^*}|$. Therefore, $f(t_1+1)\leq f(t_1)$.
\end{proof}

\begin{Proposition}
If adjacency matrix $\mathcal{A}_1$ is an irreducible matrix, $\lim_{t\rightarrow \infty}\mathcal{A}_1^{t-k\lfloor \frac{t}{k}\rfloor} C^{\lfloor \frac{t}{k}\rfloor} = \lim_{t\rightarrow \infty}\mathcal{A}_1^t$
\end{Proposition}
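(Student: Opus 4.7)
The plan combines two ingredients: under the added hypothesis of irreducibility, $\mathcal{A}_1$ itself becomes SIA, so $\lim_{t\to\infty}\mathcal{A}_1^t=\mathbf{1}_n\pi_1^\top$; and by Lemma~\ref{lem:lemma1}, $C$ is SIA, so $\lim_{\ell\to\infty}C^\ell=\mathbf{1}_n\pi^\top$ for some stationary distribution $\pi$ of $C$. The goal is then to show that these two rank-one limits coincide, which reduces to $\pi=\pi_1$.

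For the first ingredient, I would note that irreducibility of $\mathcal{A}_1$ precludes any node $i$ with $[\mathcal{A}_1]_{ii}=1$ (such a node would be absorbing, breaking strong connectivity for $n>1$). Assumption~\ref{as:strongly-connected}(a) then forces an odd cycle in the single strongly connected component of $\mathcal{G}_{\mathcal{A}_1}$, which together with irreducibility makes $\mathcal{A}_1$ primitive via \cite[Theorem 8.5.3]{horn2012matrix}, hence SIA. Setting $r(t):=t-k\lfloor t/k\rfloor\in\{0,1,\ldots,k-1\}$ and using the row-stochasticity identity $\mathcal{A}_1^{r}\mathbf{1}_n=\mathbf{1}_n$, the bounded trailing factor gets absorbed against the rank-one limit of $C^{\lfloor t/k\rfloor}$:
\[
\lim_{t\to\infty}\mathcal{A}_1^{r(t)}C^{\lfloor t/k\rfloor} \;=\; \mathcal{A}_1^{r(t)}\mathbf{1}_n\pi^\top \;=\; \mathbf{1}_n\pi^\top.
\]
So once $\pi=\pi_1$ is established, the proposition follows immediately since the right-hand side of the claim is exactly $\mathbf{1}_n\pi_1^\top$.

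The main obstacle is identifying $\pi$ with $\pi_1$. My plan is to exhibit $\pi_1$ as a left fixed vector of $C$ itself; then uniqueness of the stationary distribution of the irreducible matrix $C$ (Lemma~\ref{lem:lemma1}) forces $\pi=\pi_1$. Since $\pi_1^\top\mathcal{A}_1^{k-1}=\pi_1^\top$ automatically, writing $C=A(k)\mathcal{A}_1^{k-1}$ reduces this to verifying
\[
\pi_1^\top A(k) \;=\; \pi_1^\top,
\]
i.e., that the averaged matrix $A(k)$ at times $k,2k,\ldots$ preserves the stationary measure of layer~1. I would attempt this by unpacking the row-by-row definition of $A(k)$ in \eqref{eqn:m2} and using the bidirectionality of both layers from Assumption~\ref{as:strongly-connected}(c) together with the fact that every agent has $[\mathcal{A}_1]_{ii}<1$ under irreducibility. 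This is the delicate step of the argument and is where any structural relationship between the two layers' in-degree profiles must ultimately enter; I expect it to be the most technical point and possibly to require an additional mild hypothesis relating the stationary measures of $\mathcal{A}_1$ and $\mathcal{A}_2$ on the non-isolated rows.
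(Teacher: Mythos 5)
Your reduction is sound as far as it goes: under irreducibility plus Assumption~\ref{as:strongly-connected} (odd cycle, bidirectional edges) $\mathcal{A}_1$ is primitive, both limits are rank-one, the trailing factor is absorbed since $\mathcal{A}_1^{r}\mathbf{1}_n=\mathbf{1}_n$, and the proposition is exactly the statement $\pi=\pi_1$. But the step you flag as delicate --- $\pi_1^\top A(k)=\pi_1^\top$ --- is not merely technical: it does not follow from the paper's assumptions, and it fails in general, so the plan cannot be completed as written. Concretely, take $n=3$, $k=2$, layer 1 the triangle (so $\mathcal{A}_1$ is doubly stochastic and $\pi_1=(1/3,1/3,1/3)$), and layer 2 the star $1\!-\!2\!-\!3$ with a self-loop at node 2 (so every strongly connected component of each layer has an odd cycle and all diagonal entries are below one). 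Then every row of $A(2)$ is $\tfrac12([\mathcal{A}_1]_i+[\mathcal{A}_2]_i)$, $C=A(2)\mathcal{A}_1$ has all entries positive, and a direct computation gives its stationary vector as $(7/17,\,3/17,\,7/17)\neq\pi_1$; hence $\lim_t \mathcal{A}_1^{t-k\lfloor t/k\rfloor}C^{\lfloor t/k\rfloor}=\mathbf{1}\pi^\top\neq\mathbf{1}\pi_1^\top=\lim_t\mathcal{A}_1^t$ even though every hypothesis of the proposition holds. Your closing suspicion is therefore exactly right: an additional assumption tying the layers' stationary measures together (e.g.\ $\pi_1^\top\mathcal{A}_2=\pi_1^\top$, as when both layers are regular, so that the averaged rows of $A(k)$ also preserve $\pi_1$) is needed, and with it your argument closes immediately; without it there is nothing to prove.

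For comparison, the paper's own proof takes a genuinely different, norm-based route: it sets $P_t=A(t)\cdots A(2)\mathcal{A}_1$, decomposes $\mathcal{A}_1=L+E$ with $L=\lim_t\mathcal{A}_1^t$ and $E$ having zero row sums, and observes that $f(t)=\|P_t\mathcal{A}_1-L\|_\infty=\|P_tE\|_\infty$ is bounded by $\|E\|_\infty$ and non-increasing. Its final inference that $f(t)\to 0$ is, however, the same unbridged gap in different clothing: a bounded non-increasing sequence need not vanish, and in the example above it does not, since $P_t\to\mathbf{1}\pi^\top\neq L$. So your rank-one/stationary-vector framing is the cleaner diagnosis --- it isolates exactly what must be shown ($\pi=\pi_1$) and makes visible that an extra structural hypothesis is required --- but as a proof of the stated proposition it is incomplete, and necessarily so.
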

\begin{proof}
We know that $\lim_{t\rightarrow \infty}\mathcal{A}_1^t = L$. Moreover, let us define matrix production sequence $P_t = A(t)A(t-1)....A(2)\mathcal{A}_1$. We can have $\mathcal{A}_1 = L +E$ where $E$ is a matrix with row sums equal to zero demonstrating the discrepancy of matrix $\mathcal{A}_1$ from its limiting distribution. Additionally, let define function $f(t) = ||P_t\mathcal{A}_1-L||_{\infty}$. Observe that $f(t) =||P_t\mathcal{A}_1-L||_{\infty} = ||P_t(L+E)-L||_{\infty}=||L + P_tE-L||_{\infty} = ||P_tE||_{\infty}\leq ||E||_{\infty}$. Also, observe that $||P_t - P_{t'}||_{\infty}\leq 2||E||_{\infty}$ (see \cite{hendrickx2005convergence} for more explanation). Hence, series $f(t)$ for $t\in \{1,2,3,...\}$ corresponds to a Cauchy series. Note that we know the limit of $P_t$ exists. In addition, $f(t)$ is a decreasing function. As a prove, we show that $f(t)\geq f(t+1)$. Let $P_t = F$ where $B$ is a row stochastic matrix and $P_{t+1} = AF$ where $A = \mathcal{A}_1$ or $A = B$. Observe that $f(t) = ||FE||_{\infty}$ and $f(t+1) = ||AFE||_{\infty}$. Let assume $FE = S$ and $||S||_{\infty} = \sum_j |[S]_{i^*j}|$. Then, $f(t+1) =||AS||_{\infty}= \sum_j |\sum_i [A]_{k^*i}[S]_{ij}\leq \sum_i [A]_{k^*i}\sum_j|[S]_{ij}|\leq ||S||_{\infty}$. Hence, $\lim_{t\rightarrow \infty} f(t) = 0$ and $\lim_{t\rightarrow \infty}P_t = L$. Note that the limiting distribution of $P_t$ is unique as it corresponds to an SIA matrix. 
\end{proof}

\section{Conclusions}\label{sec:Conclusions}

Including the multi-layer essence of social networks is important in analyzing opinion dynamics. In this work, we modeled a coordination game on a two-layer multiplex network where one of the layers is active in all time steps, and the other one becomes activated every time after being off for $k-1$ consecutive time steps. A prevalent assumption regarding the positiveness of diagonal elements of adjacency matrices of layers was relaxed. We showed that under the mentioned conditions, opinions converge to an ultimate value and the coordination game has a unique equilibrium. Additionally, we formulated the convergence rate of the coordination game as a function of layers switching frequency leveraging graph theory and matrix theory tools. 
To mention two venues out of many for future work, one should study the same problem in the presence of stubborn agents. As mentioned in the introduction, the location of these agents could drastically affect the opinions dynamics evolution. 
Another venue is to develop the model for general multi-layer networks where assumption \ref{as:strongly-connected} is relaxed or extends the ideas for the leader-follower model.

\bibliography{references}

\end{document}